\def\tr{\mathrm{Tr}}
\begin{document}
\title{Sharing EPR steering between sequential pairs of observers}
\author{Qiao-Qiao Lv}
\affiliation{School of Mathematical Sciences, Capital Normal University, Beijing 100048, China}
\author{Jin-Min Liang}
\affiliation{School of Mathematical Sciences, Capital Normal University, Beijing 100048, China}
\author{Zhi-Xi Wang}
\email{wangzhx@cnu.edu.cn}
\affiliation{School of Mathematical Sciences, Capital Normal University, Beijing 100048, China}
\author{Shao-Ming Fei}
\email{feishm@cnu.edu.cn}
\affiliation{School of Mathematical Sciences, Capital Normal University, Beijing 100048, China}

\begin{abstract}
The recycling of quantum correlations has attracted widespread attention both theoretically and experimentally. Previous works show that bilateral sharing of nonlocality is impossible under mild measurement strategy and 2-qubit entangled state can be used to witness entanglement arbitrary many times by sequential and independent pairs of observers. However, less is known about the bilateral sharing of EPR steering yet. Here, we aim at investigating the EPR steering sharing between sequential pairs of observers. We show that an unbounded number of sequential Alice-Bob pairs can share the EPR steering as long as the initially shared state is an entangled
two-qubit pure state. The claim is also true for particular class of mixed entangled states.
\end{abstract}

\maketitle
\onecolumngrid
\baselineskip18pt

\section{Introduction}
An essential difference between quantum and classical worlds is the existence of quantum correlations like quantum steering \cite{Nielsen2000Quantum,UCNG2020,xiang2022}. In order to response the Einstein-Podolsky-Rosen paradox \cite{EPR1935}, Schr\"odinger initially proposed the concept of quantum steering in 1936 \cite{S1936}. Until 2007, Wiseman \emph{et al.} gave a strict definition of quantum steering by using the local-hidden-state (LHS) model \cite{WJD2007,JWD2007}. Unlike quantum entanglement and Bell nonlocality, one characteristic of steering is the asymmetry between observers \cite{HR2013,BVQB2014,MNBV2021}. It has recently aroused more and more interest, both form a theoretical and experimental \cite{SJWP20210,WWBWP2016,ZKC2020} perspective. People have been committed to the research of quantum steering because of the useful characteristics of steering and its potential applications in quantum information, such as one-side device-independent quantum key distribution \cite{BCWSW2012}, random-number generation \cite{SC2018,JSWP2022}, metrology \cite{YFG2021}, and so on.

The recycling of entangled systems has attracted widespread attention both theoretically \cite{CJAHWA2017,BMSS2018,HDHS2018,NDGRM2020,SGGP2015} and experimentally \cite{HZHL2018,FCTS2020,FRTL2020,FPAT2021}. The issue of the shareability of quantum correlations among multiple sequential observers was first introduced by Silva \emph{et al.} \cite{SGGP2015}. They investigated the shareability of nonlocality based on optimal weak measurements and the Clauser-Horne-Shimony-Holt (CHSH) inequality \cite{CHSH}. Multiple independent observers can share the nonlocality with half of an entangled pair by using optimal weak measurements.

In the seminal work \cite{BC2020} the authors studied fundamental limits on the nonlocality shareability and demonstrated that arbitrarily many independent observers can share the nonlocality of a maximally entangled qubit state with a single observer, where a single Alice can share the Bell-nonlocality with a sequence of Bobs who make local measurements each. Recently, in Ref. \cite{Ztg2021} the authors studied the nonlocal shareability based on an arbitrary dimensional bipartite entangled, but not necessary maximally entangled pure state. However, whether all nonlocal quantum states can generate such nonlocal correlations for arbitrary many independent observers remains a vital open problem. The above both works \cite{BC2020,Ztg2021} consider only the one side shareability of nonlocality in which a single Alice shares nonlocality with a sequence of Bobs. Two sides shareability has been also considered in Refs. \cite{Csm2021,Csm2022}. Limitations on sharing nonlocality between sequential pairs of observers is investigated via the trade off relation between the strengths and maximum reversibilities of qubit measurements. For mixed entangled bipartite states, in particular, if Alice and Bob all perform a sequence of positive operator valued measures (POVM) measurements on each side, the Bell inequality is not always violated \cite{Ba2002}.

Recently, studies on $n$-recyclable locally entangled states related to entanglement detection have attracted widespread attention \cite{CMU2022,CMUS2022}. It has been proved that there exist entangled states that do not violate the three-setting Cavalcanti, Jones, Wiseman and Reid (CJWR) linear steering inequality, whose entanglement can be detected by one Alice and arbitrary many Bobs by using suitable entanglement witness operators and measurement settings. Moreover, an arbitrary number of pairs of sequential and independent observers can witness entanglement \cite{EW,JP,EWO} by operating locally on 2-qubit entangled states, even the entanglement of the initial shared state tends to zero. More recently, recycled detection and sequential detection of genuine multipartite entanglement are investigated \cite{CMU202205,CMU202208}, which shows that it is possible to sequentially detect genuine multipartite entanglement arbitrarily many times for an arbitrarily large number of parties.

As a kind of important quantum correlations the Einstein-Podolsky-Rosen (EPR) steering has recently attracted significant interest in quantum information theory \cite{UCNG2020}. A single pair of entanglement quantum state generates a long sequence of steering correlations under sequential independent measurements on one or each qubit of the quantum state. This phenomenon is referred to as the steering shareability. EPR steering shareability with respect to weak measurement has been discussed in \cite{HQY2019,Paul2020,CHPL2020,ZHLGZ2022}. For 2-qubit system, based on unbiased inputs, a single Alice can steer two Bobs when the double violation of the Clauser-Horne-Shimony-Holt-like inequality is realized. Meanwhile, three Bobs can be steered by a single Alice when the triple violation of the three-setting linear steering inequality is observed \cite{RY2021}. Moreover, the authors conjecture that, at most $n$ Bobs can steer Alice's system based on an $n$-steering linear steering inequality \cite{SDMM2018}. For isotropic entangled states with local dimension $d$, the number of Bobs that can steer Alice is $N_{Bob}\sim \frac{d}{\log d}$ \cite{HDHS2018}. The unilateral sharing of quantum steering has been considered in \cite{HDHS2018,RY2021,SDMM2018}. Steering is intrinsically a directional form of correlation. The problem of manipulating the quantum steering direction with sequential unsharp measurements has recently been investigated in \cite{HQFXG2022}.

Although sharing quantum nonlocality and quantum entanglement with independent observables or sequential pairs of observers have been already investigated, less is known about sharing EPR steering between sequential pairs of observers. In this work, we consider the two sides shareability of steering. We investigate the EPR steering sharing between sequential pairs of observers. We consider whether an initial entangled state can be recycled to generate EPR steering between multiple independent observers on each side. We show that a 2-qubit entangled state can generate steering arbitrary many times by sequential and independent pairs of observers, when the initial pair of observers share any pure entangled state or a class of mixed entangled states.

\section{Preliminary}
Without loss of generality, an arbitrary 2-qubit state $\rho_{AB}$ shared between Alice and Bob can be written as,
\begin{align}\label{state1}
\rho_{AB}=\frac{1}{4}(\mathds{I}_2\otimes\mathds{I}_2+\vec{a}\cdot\vec{\sigma}\otimes\mathds{I}_2+\mathds{I}_2\otimes\vec{b}\cdot\vec{\sigma}
+\sum^3_{i,j=1}t_{ij}\sigma_i\otimes\sigma_j),
\end{align}
where $\mathds{I}_2$ denotes the identity operator with size $2\times2$, $\vec{a}=(a_1, a_2, a_3)$ and $\vec{b}=(b_1, b_2, b_3)$ are real vectors, $\vec{\sigma}=(\sigma_1, \sigma_2,\sigma_3)$ with $\sigma_i$ $(i=1,2,3)$ the standard Pauli matrices. We denote $T=(t_{ij})$ the correlation matrix with entries given by $t_{ij}=\tr(\rho_{AB}\sigma_i\otimes\sigma_j)$.
Under local unitary transformations, $\rho_{AB}$ can be written as \cite{LSL2008}
\begin{align}\label{state2}
\rho^{'}_{AB}=\frac{1}{4}(\mathds{I}_2\otimes\mathds{I}_2
+\vec{a^{'}}\cdot\vec{\sigma}\otimes\mathds{I}_2
+\mathds{I}_2\otimes\vec{b^{'}}\cdot\vec{\sigma}
+\sum^3_{i=1}t_i\sigma_i\otimes\sigma_i),
\end{align}
where the correlation matrix of $\rho^{'}_{AB}$ is diagonal, $T^{'}=diag(t_1, t_2, t_3)$ with $t_i=\tr(\rho^{'}_{AB}\sigma_i\otimes\sigma_i)$.

We briefly recall the concepts of steering and the three-setting CJWR linear steering inequality. For a 2-qubit state $\rho_{AB}$, we say that Alice can steer Bob if the probability correlation $P(a,b|A,B, \rho_{AB})$ can not be described by any local hidden state (LHS) model,
\begin{align}
P(a,b|A,B, \rho_{AB})=\sum_{\lambda}p_{\lambda}P(a|A, \lambda)P_Q(b|B, \rho_{\lambda}),
\end{align}
where $P(a,b|A,B, \rho_{AB})=\tr(A_a\otimes B_b\rho_{AB})$ is the probability of getting outcomes $a$ and $b$ when measurements $A$ and $B$ are performed on $\rho_{AB}$ by Alice and Bob, respectively, $A_a$ and $B_b$ are the corresponding measurement operators, $\lambda$ is the hidden variable, $\rho_{\lambda}$ is the state that Alice sends with probability $p_{\lambda}, \sum_{\lambda}p_{\lambda}=1$, $P(a|A, \lambda)$ is the conditioned probability that Alice obtains measurement outcome $a$ with respect to $\lambda$, $P_Q(b|B, \rho_{\lambda})$ is the quantum probability of getting measurement outcome $b$ obtained by measuring $B$ on the local hidden state $\rho_{\lambda}$.

There are many ways to determine whether a given bipartite quantum state is steerable from Alice to Bob or not. Here, we are interested in the CJWR type of linear steering inequality \cite{CJWR}. The following series of steering inequalities have been derived for the case that both parties are allowed to perform $n$ dichotomic measurements on their respective subsystems \cite{CJWR},
\begin{align}
F_n(\rho_{AB}, \nu)=\frac{1}{\sqrt{n}}|\sum^n_{l=1}\langle A_l\otimes B_l\rangle|\leq1,
\end{align}
where $A_l=\hat{a}_l\cdot\vec{\sigma}$, $B_l=\hat{b}_l\cdot\vec{\sigma}$, $\hat{a}_l\in\textrm{R}^3$ are unit vectors, $\hat{b}_l\in\textrm{R}^3$ are orthonormal vectors, $\nu=\{\hat{a}_1, \hat{a}_2, \ldots, \hat{a}_n, \hat{b}_1, \hat{b}_2, \ldots, \hat{b}_n\}$ denotes the set of measurement directions, $\langle A_l\otimes B_l\rangle=\tr(A_l\otimes B_l\rho_{AB})$. Particularly, the three-setting linear steering inequality can be expressed as \cite{CA},
\begin{align}\label{steeringinequality}
F_3(\rho_{AB}, \nu)=\frac{1}{\sqrt{3}}|\sum^3_{l=1}\langle A_l\otimes B_l\rangle|\leq1.
\end{align}
Violation of the inequality (\ref{steeringinequality}) implies that the state $\rho_{AB}$ is steerable from Alice to Bob. The steering observable is given by the maximum violation
\begin{align}
S(\rho_{AB})=\max_{A_l, B_l}\frac{1}{\sqrt{3}}|\sum^3_{l=1}\langle A_l\otimes B_l\rangle|.
\end{align}
As the states given in (\ref{state1}) and (\ref{state2}) are local unitary equivalent, the steering observable has the canonical form \cite{Paul2020},
\begin{align}
S(\rho_{AB})=S(\rho^{'}_{AB})=\sqrt{\tr(T^tT)}=\sqrt{\tr(T^{'t}T^{'})},
\end{align}
where $T^t$ represents the transpose of the matrix $T$.

We consider the following measurement strategy ${\cal A}_k, {\cal B}_k$, $k=1, 2, \dots, n$, where ${\cal A}_k$ (${\cal B}_k$) takes measurement setting using measurement operators $\{{\cal A}^{(i)}_k, \mathds{I}_2-{\cal A}^{(i)}_k\}$ ($\{{\cal B}^{(i)}_k, \mathds{I}_2-{\cal B}^{(i)}_k\}$) with
\begin{align}\label{MS}
{\cal A}^{(i)}_k=\frac{\mathds{I}_2+\lambda^{(i)}_k \sigma_i}{2},~~~{\cal B}^{(i)}_k=\frac{\mathds{I}_2+\eta^{(i)}_k \sigma_i}{2},~~~
i=1,2,3,
\end{align}
$\lambda^{(i)}_k,\,\eta^{(i)}_k\in(0,1)$ are the sharpness parameters of the corresponding measurements.

\begin{figure}[ht]
\includegraphics[scale=0.5]{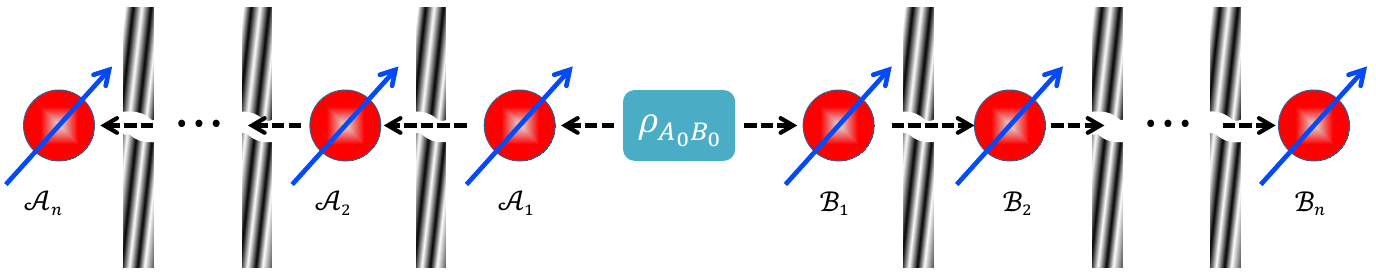}
\caption{Alice and Bob initially share 2-qubit state $\rho_{A_0B_0}$. Alice and Bob preform a series of independent measurements ${\cal A}_k, {\cal B}_k, k=1,2,\dots, n$, respectively.}
\end{figure}

Now consider the following measurement scenario (FIG. 1). The $k$-th Bob ${\cal B}_k$ performs measurement ${\cal B}_k$ and passes his part of the shared state to the $k+1$-th Bob ${\cal B}_{k+1}$. Similarly, the $j$-th Alice ${\cal A}_j$ performs measurement ${\cal A}_j$ and passes her part of the shared state to the ${j+1}$-th Alice ${\cal A}_{j+1}$. Based on the L\"{u}ders rule \cite{L2009}, ${\cal A}_0$ and ${\cal B}_k$ share a state
\begin{align}\label{LB1}
\rho_{A_0B_k}=&\frac{1}{3}\sum^3_{i=1}\Big(\mathds{I}_2\otimes\sqrt{{\cal B}^{(i)}_k}
\rho_{A_0B_{k-1}}\mathds{I}_2\otimes\sqrt{{\cal B}^{(i)}_k}\nonumber\\
&+\mathds{I}_2\otimes\sqrt{\mathds{I}_2-{\cal B}^{(i)}_k}
\rho_{A_0B_{k-1}}\mathds{I}_2\otimes\sqrt{\mathds{I}_2-{\cal B}^{(i)}_k}\Big).
\end{align}
By the chain rules, ${\cal A}_j$ and ${\cal B}_k$ share a state
\begin{align}\label{LB2}
\rho_{A_jB_k}=&\frac{1}{3}\sum^3_{i=1}\Big(\sqrt{{\cal A}^{(i)}_j}
\otimes\mathds{I}_2\rho_{{\cal A}_{j-1}B_k}\sqrt{{\cal A}^{(i)}_j}\otimes\mathds{I}_2\nonumber\\
&+\sqrt{\mathds{I}_2-{\cal A}^{(i)}_j}\otimes\mathds{I}_2
\rho_{A_{j-1}B_k}\sqrt{\mathds{I}_2-{\cal A}^{(i)}_j}\otimes\mathds{I}_2\Big).
\end{align}

\section{Shareability of EPR steering for pure and mixed entangled states}
In this section, we show that two qubits can both be recycled to generate EPR steering between independent observers on each side. Let $\rho_{A_0B_0}$ be the state initially shanred by $A_0$ and $B_0$. We assume that in the measurement strategy (\ref{MS}) adopted by ${\cal A}_k$ and ${\cal B}_k$, $\lambda^{(1)}_k=\lambda^{(2)}_k=\lambda_{k}$ and $\eta^{(1)}_k=\eta^{(2)}_k=\eta_{k}$, $\lambda^{(3)}_k=\eta^{(3)}_k=1$,  where $\lambda_k, \eta_k\in(0,1)$ \cite{CMU2022}.
By using the following relations \cite{BC2020},
\begin{align}
\sqrt{{\cal B}^{(i)}_k}=\sqrt{\frac{\mathds{I}_2+\eta^{(i)}_k \sigma_i}{2}}
=\frac{\sqrt{1+\eta^{(i)}_k}+\sqrt{1-\eta^{(i)}_k}}{2\sqrt{2}}\mathds{I}_2
+\frac{\sqrt{1+\eta^{(i)}_k}-\sqrt{1-\eta^{(i)}_k}}{2\sqrt{2}}\sigma_i
\end{align}
and
\begin{align}
\sqrt{\mathds{I}_2-{\cal B}^{(i)}_k}=\sqrt{\frac{\mathds{I}_2-\eta^{(i)}_k \sigma_i}{2}}
=\frac{\sqrt{1+\eta^{(i)}_k}+\sqrt{1-\eta^{(i)}_k}}{2\sqrt{2}}\mathds{I}_2
-\frac{\sqrt{1+\eta^{(i)}_k}-\sqrt{1-\eta^{(i)}_k}}{2\sqrt{2}}\sigma_i,
\end{align}
we obtain the state shared by $A_0$ and $B_k$,
\begin{align}\label{stateBk}
\rho_{A_0B_k}=&\frac{1}{3}\left(\frac{3+2\sqrt{1-\eta^2_{k-1}}}{2}\rho_{A_0B_{k-1}}
+\frac{1}{2}\mathds{I}_2\otimes\sigma_3\rho_{A_0B_{k-1}}\mathds{I}_2
\otimes\sigma_3\right.\\\nonumber
&\left.+\frac{1-\sqrt{1-\eta^2_{k-1}}}{2}(\mathds{I}_2
\otimes\sigma_1\rho_{A_0B_{k-1}}\mathds{I}_2\otimes\sigma_1
+\mathds{I}_2\otimes\sigma_2\rho_{A_0B_{k-1}}\mathds{I}_2\otimes\sigma_2)\right).
\end{align}
We have the following conclusion.

\newtheorem{theorem}{Theorem}
\begin{theorem}
The bilateral EPR steering sharing can be realized for the initial 2-qubit maximally entangled state $|\psi\rangle_{A_0B_0}=\frac{|00\rangle+|11\rangle}{\sqrt{2}}$.
\end{theorem}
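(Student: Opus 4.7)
The plan is to exploit the fact that the measurement channel in (\ref{stateBk}) maps Pauli tensors to Pauli tensors with scalar coefficients, so the correlation matrix $T^{(j,k)}$ of $\rho_{A_jB_k}$ evolves multiplicatively in each entry. Since the steering observable reduces to $S(\rho)=\sqrt{\mathrm{Tr}(T^tT)}$, controlling the product factors is all that is needed.

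First, I would record the initial data. For $|\psi\rangle_{A_0B_0}=(|00\rangle+|11\rangle)/\sqrt{2}$ one has $\vec a=\vec b=0$ and $T^{(0,0)}=\mathrm{diag}(1,-1,1)$, so $S(\rho_{A_0B_0})=\sqrt{3}>1$; both $\vec a$ and $\vec b$ remain zero under the Lüders channels, so only the $t_i$'s need to be tracked. Next, I would derive how one Bob-measurement acts on $T$ by plugging $\sigma_a\otimes\sigma_b$ into (\ref{stateBk}) and using $\sigma_i\sigma_j\sigma_i=\sigma_j$ for $i=j$ and $-\sigma_j$ for $i\neq j$. With the specified choice $\eta_k^{(1)}=\eta_k^{(2)}=\eta_k$, $\eta_k^{(3)}=1$, the coefficients in (\ref{stateBk}) assemble into
\begin{equation*}
t_1\mapsto\tfrac{1+\sqrt{1-\eta_{k-1}^2}}{3}\,t_1,\qquad
t_2\mapsto\tfrac{1+\sqrt{1-\eta_{k-1}^2}}{3}\,t_2,\qquad
t_3\mapsto\tfrac{1+2\sqrt{1-\eta_{k-1}^2}}{3}\,t_3.
\end{equation*}
By symmetry (the Alice channel has the same form via (\ref{LB2})) the Alice step multiplies $t_1,t_2$ by $(1+\sqrt{1-\lambda_{j-1}^2})/3$ and $t_3$ by $(1+2\sqrt{1-\lambda_{j-1}^2})/3$.

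Iterating these recursions gives a clean product formula for $T^{(j,k)}$. Writing $s_a=\sqrt{1-\lambda_a^2}$ and $r_l=\sqrt{1-\eta_l^2}$, the steering observable becomes
\begin{equation*}
S(\rho_{A_jB_k})^2 = 2\prod_{a=1}^{j}\tfrac{(1+s_a)^2}{9}\prod_{l=1}^{k}\tfrac{(1+r_l)^2}{9}
+\prod_{a=1}^{j}\tfrac{(1+2s_a)^2}{9}\prod_{l=1}^{k}\tfrac{(1+2r_l)^2}{9}.
\end{equation*}

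The final step is to show that for any $n$ one can pick sharpness parameters so that $S(\rho_{A_jB_j})>1$ for every $j\le n$. The crucial observation is the asymmetry of the two factors as $\eta,\lambda\to 0$: the $t_{1,2}$-factor tends to $2/3$, but the $t_3$-factor tends to $1$. Hence if all $\lambda_a=\eta_l=\epsilon$ are taken sufficiently small, the second term in $S^2$ stays arbitrarily close to $1$ while the first remains strictly positive, so $S^2>1$ holds simultaneously for every pair $j\le n$. Uniformity across $j$ is obtained by a finite continuity/compactness argument: at $\epsilon=0$ the strict inequality $S^2(\rho_{A_jB_j})>1$ holds for all $j$, so a small enough $\epsilon$ works for every $j$ up to the fixed bound $n$; since $n$ is arbitrary, the number of sequential pairs is unbounded.

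The main obstacle I expect is the bookkeeping in the first step, namely verifying by direct Pauli algebra that the combined three-setting Lüders channel really does decouple on the correlation matrix and yields exactly the two scalar factors above; once this multiplicative structure is in hand, the tuning of $\epsilon$ in the last step is essentially a continuity argument and presents no difficulty.
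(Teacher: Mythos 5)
Your proposal is correct, and it reaches the same key quantity as the paper --- the correlation matrix $\mathrm{diag}\bigl(\tfrac{(1+s)(1+t)}{9},-\tfrac{(1+s)(1+t)}{9},\tfrac{(1+2s)(1+2t)}{9}\bigr)$ and hence $\mathrm{Tr}(T^tT)=\tfrac{1}{81}\bigl(2(1+s)^2(1+t)^2+(1+2s)^2(1+2t)^2\bigr)$ --- but by a genuinely different route. The paper computes the post-measurement density matrix $\rho_{A_1B_1}$ explicitly in the computational basis for a single round and then asserts numerically that parameters exist with $S(\rho_{A_1B_1})>1$; it does not pursue the iteration in Theorem~1 (that is deferred to Theorem~2). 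You instead derive the action of the three-setting L\"uders channel directly on the Pauli correlators, obtaining the multiplicative update $t_{1,2}\mapsto\tfrac{1+\sqrt{1-\eta^2}}{3}\,t_{1,2}$, $t_3\mapsto\tfrac{1+2\sqrt{1-\eta^2}}{3}\,t_3$ (which I have checked against Eq.~(\ref{stateBk}): the coefficients $\tfrac{1}{3}\bigl[\tfrac{3+2t}{2}-\tfrac12\bigr]=\tfrac{1+t}{3}$ and $\tfrac{1}{3}\bigl[\tfrac{3+2t}{2}+\tfrac12-(1-t)\bigr]=\tfrac{1+2t}{3}$ are exactly right), and then close with an analytic continuity argument rather than a numerical one: at vanishing sharpness the $t_3$-multiplier equals $1$ while the $t_{1,2}$-term stays strictly positive, so $S^2=1+2(4/9)^{2j}>1$ in the limit and hence for small enough sharpness, uniformly over the finitely many $j\le n$. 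This buys you two things the paper's proof of Theorem~1 does not deliver: a rigorous (non-numerical) existence argument, and the full $n$-round statement, which essentially subsumes the $\alpha=\tfrac12$ case of Theorem~2. The one cosmetic discrepancy is that the paper's proof writes $S=\mathrm{Tr}(T^tT)$ without the square root of its own earlier definition $S=\sqrt{\mathrm{Tr}(T^tT)}$; your $S^2$ convention matches the definition, and the threshold $S>1$ is unaffected either way.
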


\begin{proof}
After Alice and Bob independently perform measurements once, we get
\begin{align}\label{state11}
\rho_{A_1B_1}=&\frac{1}{18}\left((5+s_1 \left(2 t_1+1\right)+t_1)|00\rangle\langle00|+\left(s_1+1\right) \left(t_1+1\right)|00\rangle\langle11|\right.\\\nonumber
&+(4-s_1 \left(2 t_1+1\right)-t_1)|01\rangle\langle01|+(4-s_1 \left(2 t_1+1\right)-t_1)|10\rangle\langle10|\\\nonumber
&\left.+\left(s_1+1\right) \left(t_1+1\right)|11\rangle\langle00|+(s_1 \left(5+2 t_1+1\right)+t_1)|11\rangle\langle11|\right),
\end{align}
where $s_1=\sqrt{1-\lambda^2_1}$ and $t_1=\sqrt{1-\eta^2_1}$. By directly computation, we get the correlation matrix of $\rho_{A_1B_1}$,
\begin{align}
T_{\rho_{A_1B_1}}=\frac{1}{9}\left(\begin{array}{ccc}
(s_1+1)(t_1+1)&0&0\\
0&-(s_1+1)(t_1+1)&0\\
0&0&(2s_1+1)(2t_1+1)
\end{array}\right).
\end{align}
Then the steering observable is given by
\begin{align}
S(\rho_{A_1B_1})=\tr(T^t_{\rho_{A_1B_1}}T_{\rho_{A_1B_1}})=\frac{1}{81} \left(2 \left(s_1+1\right){}^2 \left(t_1+1\right){}^2+\left(2 s_1+1\right){}^2 \left(2 t_1+1\right){}^2\right).
\end{align}
It can be numerically shown that there exist $s_1$ and $t_1$ such that $S(\rho_{A_1B_1})>1$. Namely, there exist measurement strategies to realize the bilateral EPR steering sharing.
\end{proof}

Theorem 1 shows that the EPR steering sharing can be realized for one time bilateral measurements. Such bilateral measurements are not unique. To study the $n$ times EPR steering sharing, we consider particular measurements with $\lambda_k=\eta_k$, $k=1,2$.

\begin{theorem}
For arbitrary 2-qubit entangled pure quantum state $|\varphi\rangle_{A_0B_0}$ with Schmidt decomposition $|\varphi\rangle_{A_0B_0}=\sqrt{\alpha}|00\rangle+\sqrt{1-\alpha}|11\rangle$, the $n$-recycle EPR steering sharing can be realized for $\alpha\in(0,\frac{1}{2}]$.
\end{theorem}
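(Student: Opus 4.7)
The plan is to derive a transparent recurrence for the correlation matrix under the restricted measurement scheme, iterate it to obtain $S(\rho_{A_jB_j})$ in closed form at every stage $j\le n$, and then exhibit a single set of sharpness parameters violating the three-setting steering inequality simultaneously at every stage.

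First I would compute the correlation matrix of the Schmidt-decomposed initial pure state, obtaining $T^{(0)}=\mathrm{diag}\bigl(2\sqrt{\alpha(1-\alpha)},\,-2\sqrt{\alpha(1-\alpha)},\,1\bigr)$, which is already diagonal. Next, using the Pauli identity $\sigma_k\sigma_j\sigma_k=\sigma_j$ for $k=j$ and $\sigma_k\sigma_j\sigma_k=-\sigma_j$ otherwise, together with the assumptions $\lambda^{(3)}_k=\eta^{(3)}_k=1$, $\lambda^{(1)}_k=\lambda^{(2)}_k=\lambda_k$ and $\eta^{(1)}_k=\eta^{(2)}_k=\eta_k$, I would verify directly from (\ref{LB1}) and its Alice-side analogue (\ref{LB2}) that one round of bilateral measurements preserves the diagonal structure of $T$, multiplying $T_{11}$ and $T_{22}$ by $(1+s_k)(1+t_k)/9$ and $T_{33}$ by $(1+2s_k)(1+2t_k)/9$, where $s_k=\sqrt{1-\lambda_k^2}$ and $t_k=\sqrt{1-\eta_k^2}$. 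Adopting the sufficient restriction $\lambda_k=\eta_k$ for every $k$ (so $s_k=t_k$) and iterating from $k=1$ to $j$ then yields
$$S^2(\rho_{A_jB_j})=8\alpha(1-\alpha)\prod_{k=1}^{j}\frac{(1+s_k)^4}{81}+\prod_{k=1}^{j}\frac{(1+2s_k)^4}{81}.$$

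It remains to exhibit a single tuple $(\lambda_1,\dots,\lambda_n)\in(0,1)^n$ with $S(\rho_{A_jB_j})>1$ for all $j=1,\dots,n$. I would set $\lambda_1=\cdots=\lambda_n=:\epsilon$ and let $\epsilon\to 0^+$, so that $s_k\to 1^-$. In this limit the second product tends to $1$ while the first tends to $8\alpha(1-\alpha)(4/9)^{2j}$, whence $S^2(\rho_{A_jB_j})\to 1+8\alpha(1-\alpha)(4/9)^{2j}>1$ for every $j\le n$, because $\alpha(1-\alpha)>0$ whenever $\alpha\in(0,1/2]$. By continuity of each of the $n$ maps $\epsilon\mapsto S^2(\rho_{A_jB_j})$ at $\epsilon=0$, one can choose $\epsilon>0$ small enough to secure all $n$ strict inequalities at once, which establishes the $n$-recycle EPR steering sharing.

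The main technical obstacle is the algebraic verification that the Bob and Alice channels act diagonally on the correlation matrix with exactly the prefactors claimed; this amounts to a careful sign tabulation of $\sigma_k\sigma_j\sigma_k$ across the three measurement directions and tracking the cancellations in the weighted sums that appear in (\ref{LB1}) and (\ref{LB2}). Once this algebra is in hand, the remainder is a routine continuity argument, exploiting the structural fact that the projective $\sigma_3$ component of the protocol leaves $T_{33}$ undamaged in the limit $\epsilon\to 0^+$, so the residual $z$--$z$ steering signal survives through any finite number of sequential rounds despite the strict requirement $\lambda_k,\eta_k\in(0,1)$.
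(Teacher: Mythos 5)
Your proposal is correct and follows essentially the same route as the paper: the same restricted measurement strategy, the same diagonal update of the correlation matrix yielding $S^2(\rho_{A_jB_j})=8\alpha(1-\alpha)\prod_{k=1}^{j}\frac{(1+s_k)^4}{81}+\prod_{k=1}^{j}\frac{(1+2s_k)^4}{81}$, and the same idea of pushing the sharpness toward the projective limit so that the $\sigma_3$--$\sigma_3$ contribution survives every round. Your uniform-$\epsilon$ choice with an explicit continuity argument over the finitely many stages $j\le n$ is a slightly cleaner packaging of the paper's $s_i=1-f(k)$ limiting argument, but it is not a genuinely different proof.
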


\begin{proof} Starting from the initial state $|\varphi\rangle_{A_0B_0}$ and using the measurement strategy (\ref{MS}) with $\lambda_k=\eta_k$, $k=1,2$, we obtain the correlation matrix $T_{\xi_{A_kB_k}}$ of $\xi_{A_kB_k}$,
\begin{align}
T_{\xi_{A_kB_k}}=\frac{1}{9^k}\left(\begin{array}{ccc}
2\prod^k_{i=1}\sqrt{ \alpha(1-\alpha) } \left(s_i+1\right)^2&0&0\\
0&-2\prod^k_{i=1}\sqrt{ \alpha(1-\alpha) } \left(s_i+1\right)^2&0\\
0&0&\prod^k_{i=1}(2s_i+1)^2
\end{array}\right),
\end{align}
where $|\varphi\rangle_{A_kB_k}$ is the state after Alice and Bob make $k$ times measurements respectively, $s_i=\sqrt{1-\lambda^2_i}$. By straightforward calculation we get
\begin{align}
S(\xi_{A_kB_k})=\frac{1}{81^k} \left(8\alpha(1-\alpha)\prod^k_{i=1}\left(s_i+1\right){}^4+\prod^k_{i=1}\left(2 s_i+1\right){}^4\right),
\end{align}
which is monotonically increasing with $\alpha$.

Assume $s_i=1-f(k)$ such that $f(k)\rightarrow 0$ when $k\rightarrow\infty$ for $k\in \textrm{N}$, where $k$ is number of the measurements that Alice and Bob perform. We have
\begin{align}
S(\xi_{A_kB_k})=\frac{1}{81^k} \left(8\alpha(1-\alpha)\left(2-f(k)\right)^{4k}+\left(3-2f(k)\right)^{4k}\right).
\end{align}
We observe that $S(\xi_{A_kB_k})\rightarrow 1$ when $k\rightarrow\infty$ for $k\in \textrm{N}$. Moreover, for any $\alpha$ we can always choose $f(1)$ such that $S(\xi_{A_1B_1})>1$. Hence $S(\xi_{A_kB_k})>1$ for all $k\in \textrm{N}$. Therefore, an unbounded number of sequential Alice-Bob pairs can share the EPR steering as long as the initial pair is  entangled.
\end{proof}

{\sf Remark:}

(1) For $\alpha=\frac{1}{2}$, $|\varphi\rangle_{A_0B_0}$ is a maximally entangled state. We can choose $f(k)=10^{-k}$.  Based on the relarion between $f(k)$ and $\lambda_k$, we can get the measurement strategies. Then $S(\xi_{A_kB_k})>1$ for all $k\in\textrm{N}$.

(2) For smaller $\alpha$, for instance $\alpha=0.1$, we can choose $f(k)=100^{-k}$.  Simalar to (1), we can obtain the corresponding measurement strategies. We still have $S(\xi_{A_kB_k})>1$ for all $k\in\textrm{N}$.

(3) When $\alpha$ tends to zero, in order to ensure that $\rho_{A_kB_k}$ is steerable, $s_i$ needs to tend to 1. Then the selection of measurement strategies will decrease.

Hence, the closer $\alpha$ tends to zero, the less measurement strategies we can choose, as less entanglement implies less amount of quantum resource.

Next we consider the shareability of EPR steering for mixed entangled states. We first consider the following class of entangled states,
\begin{align}\label{MS1}
\gamma_{A_0B_0}=m_1|\varphi\rangle_{A_0B_0}\langle\varphi|+m_2|00\rangle\langle00|+m_3|11\rangle\langle11|,
\end{align}
where $|\varphi\rangle_{A_0B_0}=\sqrt{\alpha}|00\rangle+\sqrt{1-\alpha}|11\rangle$, $m_1>0$, $m_2, m_3\geq0$ and $m_1+m_2+m_3=1$, $\alpha\in(0,\frac{1}{2}]$. 
When $m_1>0$, $m_2, m_3\geq0$ and $m_1+m_2+m_3=1$, $\alpha\in(0,\frac{1}{2}]$, $\gamma_{A_0B_0}$ is entangled.
Moreover, $\gamma_{A_0B_0}$ is steerable for $8\alpha(1-\alpha)m^2_1+1>1$.
By direct calculation we obtain the correlation matrix of $\gamma_{A_0B_0}$,
\begin{align}
T_{\gamma_{A_0B_0}}=\frac{\prod^k_{i=1}(2s_i+1)^2}{9^k}\left(\begin{array}{ccc}
2\sqrt{ \alpha(1-\alpha) }m_1 &0&0\\
0&-2\sqrt{ \alpha(1-\alpha) }m_1&0\\
0&0&1
\end{array}\right),
\end{align}
where $s_i=\sqrt{1-\lambda^2_i}$. Then
\begin{align}
S(\gamma_{A_kB_k})=\frac{1}{81^k} \left(8\alpha(1-\alpha)m^2_1\prod^k_{i=1}\left(s_i+1\right){}^4+\prod^k_{i=1}\left(2 s_i+1\right){}^4\right).
\end{align}
Considering $s_i=1-f(k)$ such that $f(k)\rightarrow 0$ when $k\rightarrow\infty$ for $k\in \textrm{N}$, we obtain
\begin{align}
S(\gamma_{A_kB_k})=\frac{1}{81^k} \left(8\alpha(1-\alpha)m^2_1\left(2-f(k)\right){}^{4k}+\left(3-2f(k)\right){}^{4k}\right).
\end{align}
$S(\gamma_{A_kB_k})\rightarrow 1$ when $k\rightarrow\infty$ for all $k\in \textrm{N}$ and $S(\gamma_{A_1B_1})>1$ for suitable $f(k)$. Namely, $S(\gamma_{A_kB_k})>1$ for all $k\in \textrm{N}$. Therefore, there are unbounded sequential Alice-Bob pairs to share the EPR steering when the initial pair is a mixed steerable state.

For weakly entangled mixed states, we consider the following 2-qubit state,
\begin{align}\label{state}
\tilde{\gamma}_{A_0B_0}=\frac{1}{4}(\mathds{I}_2\otimes\mathds{I}_2-\cos\theta\sigma_1\otimes\sigma_1-\alpha\sin\theta\sigma_2\otimes\sigma_2
-\alpha\sin\theta\sigma_3\otimes\sigma_3),
\end{align}
where $\frac{\sqrt{2}}{2}<\alpha\leq 1$ and $\theta\in(0,\frac{\pi}{4}]$. 
When $\frac{\sqrt{2}}{2}<\alpha\leq 1$ and $\theta\in(0,\frac{\pi}{4}]$, $\tilde{\gamma}_{A_0B_0}$ is entangled.
Moreover, when $\cos^2\theta+\alpha^2\sin^2\theta>1$, $\tilde{\gamma}_{A_0B_0}$ is steerable. After $k$ times measurements, the correlation matrix of $\tilde{\gamma}_{A_kB_k}$ is given by
\begin{align}
T_{\tilde{\gamma}_{A_kB_k}}=-\frac{1}{9^k}\left(\begin{array}{ccc}
\cos\theta\prod^k_{i=1}\left(s_i+1\right)^2&0&0\\
0&\alpha\sin\theta\prod^k_{i=1} \left(s_i+1\right)^2&0\\
0&0&\alpha\sin\theta\prod^k_{i=1}(2s_i+1)^2
\end{array}\right).
\end{align}
Then
\begin{align}
S(\tilde{\gamma}_{A_kB_k})=\frac{1}{81^k} \left(\cos^2\theta\prod^k_{i=1}\left(s_i+1\right){}^4+\alpha^2\sin^2\theta\prod^k_{i=1}\left(s_i+1\right){}^4
+\alpha^2\sin^2\theta\prod^k_{i=1}\left(2 s_i+1\right){}^4\right).
\end{align}
Considering $s_i=1-f(k)$ such that $f(k)\rightarrow 0$ when $k\rightarrow\infty$ for $k\in \textrm{N}$, we obtain
\begin{align}
S(\tilde{\gamma}_{A_kB_k})=\frac{1}{81^k} \left(\cos^2\theta\left(2-f(k)\right){}^{4k}+\alpha^2\sin^2\theta\left(2-f(k)\right){}^{4k}
+\alpha^2\sin^2\theta\left(3-2f(k)\right){}^{4k}\right).
\end{align}
We find that $S(\tilde{\gamma}_{A_kB_k})\rightarrow \alpha^2\sin^2\theta$ as $k\rightarrow\infty$. However, as $\alpha^2\sin^2\theta<1$ for $\frac{\sqrt{2}}{2}<\alpha\leq 1$, $\theta\in(0,\frac{\pi}{4}]$, the EPR steering sharing for an unbounded number of sequential Alice-Bob pairs is impossible for this initial mixed entangled state.

Finally, we consider 2-qubit Werner states,
\begin{align}
	\gamma^W_{A_0B_0}=m|\varphi^{'}\rangle_{A_0B_0}\langle\varphi^{'}|+(1-m)\frac{\mathds{I}_2}{2}\otimes\frac{\mathds{I}_2}{2},
\end{align}
where $|\varphi^{'}\rangle_{A_0B_0}=\cos\theta|00\rangle+\sin\theta|11\rangle$, $\theta\in[0,\frac{\pi}{4}]$, $0\leq m<1$. $\gamma^W_{A_0B_0}$ is entangled for $m\geq\frac{1}{1+2\sin(2\theta)}$. After $k$-th measurement, the correlation matrix of $\gamma^W_{A_kB_k}$ is given by
\begin{align}
	T_{\gamma^W_{A_kB_k}}=-\frac{1}{9^k}\left(\begin{array}{ccc}
		m\sin(2\theta)\prod^k_{i=1}\left(s_i+1\right)^2&0&0\\
		0&-m\sin(2\theta)\prod^k_{i=1}\left(s_i+1\right)^2&0\\
		0&0&m\prod^k_{i=1}(2s_i+1)^2
	\end{array}\right).
\end{align}
Then
\begin{align}
	S(\gamma^W_{A_kB_k})=\frac{m^2}{81^k} \left(2\sin^2(2\theta)\prod^k_{i=1}(s_i+1){}^4+\prod^k_{i=1}\left(2s_i+1\right){}^4\right).
\end{align}
Considering $s_i=1-f(k)$ such that $f(k)\rightarrow 0$ when $k\rightarrow\infty$ for $k\in \textrm{N}$, we obtain
\begin{align}
	S(\gamma^W_{A_kB_k})=\frac{m^2}{81^k} \left(2\sin^2(2\theta)\prod^k_{i=1}(f(k))+\prod^k_{i=1}\left(3-2f(k)\right){}^4\right).
\end{align}
We have $S(\gamma^W_{A_kB_k})\rightarrow m^2<1$ as $k\rightarrow\infty$. The EPR steering sharing for an unbounded number of sequential Alice-Bob pairs is impossible when the initial mixed entangled state is a Werner state.

Based on the analysis on the three types of mixed states above, we see that there are unbounded sequential Alice-Bob pairs to share the EPR steering when the initially shared states are given by (\ref{MS1}). When the initially shared states are given by (\ref{state}) or the Werner states, there are bounded sequential Alice-Bob pairs to share the EPR steering.

\section{Conclusion}
We have investigated whether an initial entangled state can both be recycled to generate EPR steering between multiple independent observers on each side. Different from previous study \cite{HDHS2018,RY2021}, they only consider the scenario that a single Alice steers multiple Bobs. In this wok,  we consider the EPR steering sharing between sequential pairs of observers. Enriching the results on recycling of the quantum steering, we have found that a 2-qubit entangled state can be used to generate such steerability arbitrary many times by sequential and independent pairs of observers. The conclusion is also true when the initial pair of observers share a pure entangled state or a particular class of mixed entangled states, but not all entangled mixed states according to our measurement strategy. Whether there are other methods to increase the number of bilateral sharing for all mixed entangled state is still an open question.
Our approach may also highlight researches on recycling other quantum correlations.

\begin{acknowledgements}
This work is supported by the National Natural Science Foundation of China (NSFC) under Grants 12075159 and 12171044; Beijing Natural Science Foundation (Grant No. Z190005); the Academician Innovation Platform of Hainan Province.
\end{acknowledgements}

\end{document}